\documentclass[letterpaper, 10 pt, conference]{ieeeconf}  

\IEEEoverridecommandlockouts                              

\makeatletter
\let\proof\@undefined
\let\endproof\@undefined
\makeatother
\usepackage{amsthm}
\renewenvironment{proof}[1][\proofname]{\par\itshape\noindent\textit{#1.}\enspace}{\par}

\usepackage{amsmath,amssymb,amsfonts,bm, mathtools, dsfont}
\usepackage{xcolor}
\usepackage[nolist,nohyperlinks]{acronym}
\usepackage{siunitx}
\usepackage{tikz}
\usepackage{pgfplots}
\usepackage{pgfplotstable}
\usepackage{comment}
\usepackage{algorithm}
\usepackage{algpseudocode}
\usepackage{cite}
\usepackage{textcomp}
  
\newtheorem{definition}{Definition}
\newtheorem{theorem}{Theorem}
\newtheorem{lemma}{Lemma}

\usepackage{soul}
\setstcolor{red}

\usepackage{hyperref}
\usepackage{siunitx}
\newcommand{\proposed}{$\mu_p$}
\newcommand{\nominal}{$\mu_n$}
\newcommand{\surrogate}{$\mu_s$}
\usepackage{calc}          
\pgfplotsset{compat=1.18}
\usetikzlibrary{calc}      
\usepgfplotslibrary{colormaps}

\title{\LARGE \bf
Importance Sampling for Statistical Certification of Viable Initial Sets 
}

\author{Elizabeth Dietrich$^{1}$, Hanna Krasowski$^{1}$, Vegard Flovik$^{2}$, Murat Arcak$^{1}$
\thanks{$^{1}$E. Dietrich, H. Krasowski, and M. Arcak are with the University of California, Berkeley, USA. Email: {\tt\small \{eadietri, krasowski, arcak\}@berkeley.edu}.}
\thanks{$^{2}$V. Flovik is with Group Research and Development, DNV. Email: {\tt\small vegard.flovik@dnv.com}}%
\thanks{This work was funded in part by the NSF grant CNS-2111688. E. Dietrich was also supported by an NSF Graduate Research Fellowship.}
}

\begin{document}

\begin{acronym}
    \acro{IS}{Importance Sampling}
    \acro{GP}{Gaussian process}
\end{acronym}

\maketitle
\thispagestyle{empty}
\pagestyle{empty}

\begin{abstract}
We study the problem of statistically certifying viable initial sets (VISs)---sets of initial conditions whose trajectories satisfy a given control specification. 
While VISs can be obtained from model-based methods, these methods typically rely on simplified models. 
We propose a simulation-based framework to certify VISs by estimating the probability of specification violations under a high-fidelity or black-box model. 
Since detecting these violations may be challenging due to their scarcity, we propose a sample-efficient framework that leverages importance sampling to target high-risk regions.
We derive an empirical Bernstein inequality for weighted random variables, enabling finite-sample guarantees for importance sampling estimators. We demonstrate the proposed approach on two systems and show improved convergence of the resulting bounds on an adaptive cruise control benchmark.

\end{abstract}

\section{Introduction}
A central task in the control of autonomous systems is characterizing sets of initial conditions from which trajectories satisfy a given specification under appropriate control actions. We refer to these collectively as viable initial sets (VISs), using “viable” broadly to encompass multiple notions of specification satisfaction. 
Examples of VISs include controlled invariant sets (within which trajectories remain) and backward-reachable or reach-avoid sets (states from which a target can be reached, possibly while avoiding unsafe regions). More generally, VISs capture initial conditions that guarantee satisfaction of temporal logic specifications.

Model-based approaches to construct VISs have been extensively studied  \cite{blanchini1999set,prajna2004safety,MitBayTom05,KurzhanskiVaraiya2006,Aubin2009,tabuada,belta,majumdar2017funnel}. For example, symbolic control methods 
\cite{tabuada,belta} automatically construct VISs as the domain of a controller that enforces a temporal logic specification.
Although many model-based approaches also yield control policies that enforce a specification, the resulting VISs are of independent interest, as they decouple viable initial conditions from the specific control approach.
For instance, these sets can be incorporated as state constraints in model predictive control, providing a safety (or specification-satisfying) envelope within which additional objectives, such as passenger comfort in a self-driving car, can be optimized.
However, VISs derived from model-based methods are not necessarily accurate for real-world systems, where the true dynamics are often infeasible to model or proprietary.

To address these limitations, we start from a candidate VIS, possibly derived from a simplified model, and statistically estimate the probability of specification violation. If sufficiently small, this estimate can support certification; otherwise, it signals the need for a more precise VIS computation.
In this work, we propose a framework that leverages importance sampling 
to target critical regions, enabling efficient quantification and certification of rare-event probabilities.
Specifically, our contributions include:
\begin{itemize}
    \item We develop a sample-efficient, data-driven algorithm for characterizing failure sets and computing the failure probability of a candidate VIS.
    \item We propose a Probably Approximately Correct (PAC) estimator for failure probabilities that leverages importance sampling and empirical Bernstein inequalities. 
    \item We provide an empirical comparison against a standard binomial tail inversion.
\end{itemize}

The remainder of this paper is structured as follows. Sec. \ref{sec:prelim} introduces preliminary concepts. Sec.~\ref{sec:probstate} and Sec.~\ref{sec:method} describe the problem formulation, proposed algorithm, and PAC estimator. Finally, Sec. \ref{sec:results} presents an evaluation on both low- and high-dimensional control benchmarks.

\subsection{Related Works}
Simulation-based validation approaches \cite{Kapinski2016SimulationBasedVerification} aim to efficiently explore low-probability, high-risk scenarios in safety-critical systems. 
Two important paradigms are falsification, which targets individual failure trajectories, and verification, which provides formal certificates of correctness. Common falsification frameworks rely on optimization \cite{Deshmukh2017, taliro2011} or reinforcement learning \cite{wang2020}, and usually cover a limited portion of the failure space. To improve coverage, recent work has incorporated verification techniques, such as barrier certificates \cite{murali2024}. Alternatively, probabilistic verification \cite{katoen2016, prism2006} offers quantitative analysis of stochastic system properties. 
However, these approaches remain limited in scalability and expressiveness for real-world systems. In this work, we explicitly characterize failure sets within candidate VISs, enabling an evaluation of failure coverage.

Failure probability estimation \cite{Corso2021Survey} provides an informative assessment of system safety by characterizing the distribution of rare events. 
In particular, \ac{IS} algorithms have received significant attention in applications such as autonomous driving \cite{okelly2018} and aircraft collision avoidance~\cite{kim2016}, due to their ability to produce proposal distributions closely aligned with the failure distribution. Recent work~\cite{delecki2025} has focused on improving \ac{IS} tractability for large state spaces and long horizons.
While standard \ac{IS} estimates are unbiased in expectation, limited simulations make it challenging to quantify confidence in these estimates. 
Therefore, we propose a PAC-style estimator for \ac{IS} that provides finite-sample concentration guarantees and improves convergence to the true failure probability. 

PAC bounds \cite{pacbound} yield two layers of probability, providing a prediction accuracy guarantee and confidence that the randomly drawn samples make this guarantee valid. PAC guarantees are commonly derived using methods from statistical learning theory, such as Vapnik-Chervonenkis dimension \cite{alamo2009}, validation techniques, such as the holdout method \cite{langford2005tutorial, tempobook}, or scenario-based optimization \cite{campi2018introduction}. However, these techniques often rely on restrictive sampling schemes or specific parameterizations. Concentration inequalities, such as Hoeffding's \cite{hoeffding} and Bennett's \cite{bennett} inequalities, also offer distribution-free, finite-sample guarantees under general sampling schemes, including weighted estimates, and scale well to high-dimensional problems. In this work, we extend an empirical Bernstein bound to explicitly handle weighted random variables with reduced sample complexity.

\section{Preliminaries}
\label{sec:prelim}

We denote sets with calligraphic letters, 
and probability distributions with the letter $\mu$. Specifically, we define a proposed \proposed{}, nominal \nominal{}, and surrogate \surrogate{} distribution.
The expected value of a random variable $X$ is denoted as $\mathbb{E}[{X}]$, and probabilities are denoted with $P$.

\subsection{Empirical Bernstein Bound}
Empirical Bernstein bounds \cite{maurer2009empirical} are concentration inequalities that bound the probability of the empirical mean deviating from its true expectation, using a data-driven estimate of variance to produce tighter, variance-sensitive confidence bounds. Bernstein bounds scale asymptotically with the true variance, yielding tighter guarantees than Hoeffding's inequality for hypotheses of small variance. Furthermore, the empirical formulation circumvents the need for prior variance knowledge, as required by Bennett's inequality or the classical Bernstein bound. The empirical Bernstein bound given by \cite{maurer2009empirical} is as follows:

\begin{theorem}[Theorem 4, \cite{maurer2009empirical}]
\label{thm:empiricalbernstein}
    Let ${X},X_1, \dots, X_N$ be i.i.d. random variables with values in $[0, 1]$ and let $\beta > 0$. Then, with probability at least $1-\beta$, 
    \begin{equation}
    \label{eq:empiricalbernstein}
        \mathbb{E}[{X}] - \frac{1}{N}\sum^N_{i=1}X_i \leq \sqrt{\frac{2 \hat{V}_{X} \ln(2/\beta)}{N}} + \frac{7 \ln(2/\beta)}{3(N-1)}, 
    \end{equation}
    where $\hat{V}_{X}$ is the sample variance, 
    \begin{equation}
        \hat{V}_{X} = \frac{1}{N(N-1)}\sum_{1 \leq i \leq j \leq N} (X_i - X_j)^2.
    \end{equation}
\end{theorem}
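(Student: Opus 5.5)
The plan follows the two-step structure of Maurer and Pontil: a Bernstein-type bound for the mean with the \emph{true} variance, plus a concentration bound showing that the sample standard deviation cannot be much smaller than the true one. A union bound over the two events, each at level $\beta/2$, gives the factor $\ln(2/\beta)$ in \eqref{eq:empiricalbernstein}.

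\textbf{Step 1 (Bennett/Bernstein with true variance).} Let $V=\mathrm{Var}(X)$ and set $Y_i=\mathbb{E}[X]-X_i$, so that $Y_i\le 1$ and $\mathbb{E}[Y_i]=0$. I would use the Chernoff method with the standard estimate $\mathbb{E}[e^{\lambda Y}]\le \exp\bigl(V(e^\lambda-1-\lambda)\bigr)$, valid because $Y\le 1$. Optimizing over $\lambda$ and weakening Bennett to Bernstein form gives, with probability at least $1-\beta/2$,
\begin{equation*}
\mathbb{E}[X]-\frac{1}{N}\sum_{i=1}^N X_i\le \sqrt{\frac{2V\ln(2/\beta)}{N}}+\frac{\ln(2/\beta)}{3N}.
\end{equation*}

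\textbf{Step 2 (lower deviation of the sample standard deviation).} The U-statistic $\hat V_X$ can be viewed as a function $Z(X_1,\dots,X_N)$ of the samples. It satisfies self-bounding-type conditions: replacing one coordinate changes $(N-1)\hat V_X$ by at most $1$, and the sum of squared one-sided changes is controlled by $(N-1)\hat V_X$ itself. I would apply a concentration inequality for self-bounding functions of this kind (e.g.\ Maurer's lower-tail inequality, proved via the entropy method/tensorization of entropy). This should give, with probability at least $1-\beta/2$,
\begin{equation*}
\sqrt{V}\le \sqrt{\hat V_X}+\sqrt{\frac{2\ln(2/\beta)}{N-1}}.
\end{equation*}
This step is the main obstacle. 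The delicate part is verifying the self-bounding conditions, where the square-root form comes from solving a quadratic in $\sqrt{V}$ after getting a bound on $\mathbb{E}\hat V_X - \hat V_X$ of the form $\sqrt{2V\ln(1/\delta)/(N-1)}$. I would first try bounding the log-moment generating function of $-Z$ through the entropy method, using $\mathbb{E}\hat V_X=V$.

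\textbf{Step 3 (combine).} On the intersection of the two events, which has probability at least $1-\beta$, substitute the bound from Step 2 into Step 1:
\begin{equation*}
\sqrt{\frac{2V\ln(2/\beta)}{N}}\le \sqrt{\frac{2\hat V_X\ln(2/\beta)}{N}}+\frac{2\ln(2/\beta)}{\sqrt{N(N-1)}}.
\end{equation*}
Since $\sqrt{N(N-1)}\ge N-1$, the extra term is at most $\frac{2\ln(2/\beta)}{N-1}$. Adding the $\frac{\ln(2/\beta)}{3N}\le\frac{\ln(2/\beta)}{3(N-1)}$ term from Step 1 gives the constant $2+\tfrac13=\tfrac73$, which yields \eqref{eq:empiricalbernstein}.
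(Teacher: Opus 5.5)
Your proposal is correct and is essentially the argument of Maurer and Pontil~\cite{maurer2009empirical}, which the paper imports by citation without giving a proof of its own. That argument has three parts: Bernstein's inequality with the true variance; their lower-tail bound $\sqrt{V}\le\sqrt{\hat V_X}+\sqrt{2\ln(2/\beta)/(N-1)}$, obtained from the self-bounding entropy-method inequality (for $(N-1)\hat V_X$ the self-bounding constant is indeed $1$); and a union bound at level $\beta/2$ each. The one place to be careful is Step 1: the term $\ln(2/\beta)/(3N)$ comes from inverting the sub-gamma bound $NV(e^\lambda-1-\lambda)\le NV\lambda^2/\bigl(2(1-\lambda/3)\bigr)$ directly, whereas inverting the weakened tail $\exp\bigl(-Nt^2/(2V+2t/3)\bigr)$ gives only $2\ln(2/\beta)/(3N)$ there, which falls short of the constant $7/3$.
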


\subsection{Binomial Tail Bound}
A binomial tail bound evaluates the failure probability of a hypothesis on i.i.d. Bernoulli random variables. For a sample set of size $M$, it calculates the probability of observing at most $k$ failures, which is equivalent to computing a tail bound of a binomial distribution.
In other words, a binomial tail inversion provides a bound on the true error rate such that the probability of $k$ or fewer failures is at least $\beta$ \cite{ langford2005tutorial}.

\begin{definition}[Binomial Tail Inversion]
\label{def:binomial_tail_inversion}
For $k$ violations out of $M$ scenarios and all $\beta \in (0, 1]$:
    \begin{equation}
        \overline{\text{Bin}}(k, M, \beta) = \max_{e}\Bigl\{ e : \text{Bin}\Bigl(k, M, e \Bigr) \geq \beta \Bigr\}, \,\,\text{where}
        \notag
    \end{equation}
    \begin{equation}
        \text{Bin}\Bigl(k, M, e \Bigr) =  \sum_{j=0}^{k} \binom Mj e^j (1-e)^{M-j}.
        \label{eq:binominversion}
    \end{equation}
    \label{def1}
\end{definition}
\subsection{Importance Sampling (IS)}
Importance Sampling (IS)~\cite{owen2013montecarlo, tokdar2010} refers to a class of Monte Carlo methods that approximate mathematical expressions with respect to a target distribution by weighting outcomes drawn from an alternative distribution. 
\ac{IS} approaches intentionally bias a sampling distribution towards regions of interest, increasing the frequency with which rare events are observed. 
In this paper, we focus on the use of \ac{IS} to provide an unbiased estimate of the true failure probability under the nominal distribution \cite{Corso2021Survey}.
The relationship between the proposed \proposed{} and nominal distribution \nominal{} is quantified through the likelihood ratio, or importance weight, 
\begin{equation}
\label{eq:likelihoodratio}
    w(\theta) = \frac{\mu_n(\theta)}{\mu_p(\theta)},
\end{equation}
which acts as a correction factor that compensates for sampling from \proposed{} instead of \nominal{}.
Naive applications of \ac{IS} can result in significant variance inflation due to unbounded importance weights. To mitigate this, defensive \ac{IS} \cite{Hesterberg01051995, Owen01032000} utilizes a mixture proposal to form the proposed distribution \proposed{}. This proposed distribution combines the surrogate \surrogate{} and nominal \nominal{} distributions to ensure sufficient tail coverage in \proposed{} and bounded importance weights. 

\section{Problem Statement}
\label{sec:probstate}
Consider a dynamical system of the form 
\begin{equation}\label{sys:eq}
    \dot{x} = f(x, d), \quad x(0)=\theta,
\end{equation}
where $x$ represents the state and $d$ represents disturbances. Uncertain system parameters can be incorporated into the model by treating them as additional state variables with zero derivatives, so their values are captured in the initial condition vector $\theta$.
We assume that this model incorporates a controller designed to satisfy given specifications and that a candidate VIS $\mathcal{C}$---set of initial conditions $\theta$ which yield trajectories satisfying the specification---is available. Since computing VISs is generally intractable for a high-fidelity model, we assume the candidate VIS is obtained from a simplified model and may contain points $\theta\in \mathcal{C}$ that fail the specification under the true dynamics (\ref{sys:eq}). 
Additionally, we assume the existence of an oracle $\phi$ that evaluates the trajectories of (\ref{sys:eq}) and determines whether a given $\theta\in \mathcal{C}$  leads to a failure.

Given a probability distribution \nominal{} supported on $\mathcal{C}$, our goal is to obtain a bound $\epsilon$ on the failure probability $P_{fail}$ from $N$ i.i.d. samples $\{\theta_i\}_{i=1}^N$ that holds with confidence $1-\beta$, i.e.,
\begin{align}\label{eq:problemstatement_maingoal} 
    &P^N( P_{fail} \leq \epsilon) \geq 1 - \beta,
\end{align}
where $\beta$ is a user-specified parameter.

Failures are likely concentrated in regions of $\mathcal{C}$ that have low probability under the nominal distribution \nominal{}. Therefore, 
we first characterize a
failure-prone set $\mathcal{F} \subset \mathcal{C}$, 
and pursue an \ac{IS} method that targets the critical region $\mathcal{F}$ while appropriately weighting samples to yield an unbiased estimate under the nominal distribution.

\section{Methodology}
\label{sec:method}
When failure events are rare, standard Monte Carlo approaches require many samples to accurately estimate the true failure probability.
To address this challenge, we propose a method for certifying the probability of rare failure events in a candidate VIS $\mathcal{C}$ through adaptive exploration and \ac{IS}. 
In this section, we (i) employ an adaptive surrogate model to efficiently explore $\mathcal{C}$,
(ii) leverage \ac{IS} to provide a high-confidence certification of the true failure probability $P_{fail}$, and (iii) characterize a failure-prone set $\mathcal{F}$ using Gaussian processes (GP) and tractable set representations.
Combining these elements, we provide an algorithm that uses the set $\mathcal{F}$ as the support of a surrogate distribution, which is then used in IS and failure probability estimation.

\subsection{Adaptive Exploration and Importance Sampling}
We let $\theta {\sim} \mu_n$, where $\mu_n$ has support $\mathcal{C}$. Let $\delta(\theta)$ denote the resulting system trajectory and $\ell(\delta(\theta))$ be a binary loss function indicating whether a trajectory results in failure:
    \begin{equation}
        \ell(\delta(\theta)) =  \begin{cases}
                                1 & \text{if } \phi \text{ returns failure}
                                \\
                                0 & \text{otherwise}.
                                \end{cases}
    \end{equation}
We aim to derive a PAC bound on the failure probability, $$P_{fail} = \mathbb{E}_{\mu_n}[\ell(\delta(\theta))].$$
Since failures are rare under the nominal distribution \nominal{}, we sample from a proposed distribution \proposed{}.
The proposed distribution is constructed using a defensive mixture, which ensures the importance weight~\eqref{eq:likelihoodratio} remains strictly bounded,
\begin{equation} \label{eq:proposed_distirbution}
    \mu_p(\theta) = \alpha \mu_s(\theta) + (1-\alpha)\mu_n(\theta),
\end{equation}
where $\alpha \in (0,1)$ and \surrogate{} denotes a surrogate distribution. In particular, \surrogate{} biases the sampling towards rare, but critical, regions associated with failure events, increasing the number of informative samples and reducing the variance of the \ac{IS} estimator.
The construction of the defensive mixture is critical for deriving finite-sample concentration guarantees and ensuring all importance weights are uniformly bounded. 
Under this mixture, the importance weight (\ref{eq:likelihoodratio}) is bounded by
    \begin{equation}
        W_{max} = \frac{1}{1-\alpha},
    \end{equation}
    since
    \begin{equation}\notag
       w(\theta) = \frac{\mu_n(\theta)}{\alpha \mu_s(\theta) + (1-\alpha)\mu_n(\theta)} \leq \frac{\mu_n(\theta)}{(1-\alpha)\mu_n(\theta)} = \frac{1}{1-\alpha}.
    \end{equation}
The parameter $\alpha$ governs reliance on the surrogate model versus the nominal distribution. A large $\alpha$ emphasizes the surrogate's guidance, aggressively sampling in regions deemed important, while a small $\alpha$ preserves nominal coverage, yielding more conservative behavior.

\subsection{Estimator and Certification}
To bound the failure probability $P_{fail}$ using samples drawn from the proposed distribution \proposed{}, we re-weight observed outcomes to account for the sampling bias introduced by the surrogate model. Through this re-weighting procedure, we obtain the importance-weighted loss. 
\begin{definition}[Importance-Weighted Loss]
The importance-weighted loss is a bounded, random variable with values in $[0, W_{max}]$,
\begin{equation}
        Z(\theta) \coloneq w(\theta)\ell(\delta(\theta)), 
    \end{equation}
    whose samples are obtained from those of $\theta$:
    \begin{equation}\label{Zi-def}
        Z_i = w(\theta_i)\ell(\delta(\theta_i)), \quad \forall i = 1, \dots, N.
    \end{equation}
\end{definition}
\noindent By construction, ${Z}(\theta)$ is unbiased---its expectation under \proposed{} recovers the true failure probability under \nominal{}. 
\begin{lemma}[Expected Value of ${Z}$]
\label{lemma:expectedvalue}
    \begin{equation}
    \mathbb{E}_{\theta \sim \mu_p}[{Z}(\theta)] = \mathbb{E}_{\theta \sim \mu_{n}}[\ell(\delta(\theta))] = P_{fail}.
    \end{equation}
    \end{lemma}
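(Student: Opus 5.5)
The plan is to prove the identity by a direct change of measure: write the expectation under \proposed{} as an integral against the density $\mu_p$, substitute the definition of $Z$, and cancel $\mu_p$ against the denominator of the importance weight~\eqref{eq:likelihoodratio}. Throughout, $\mu_n$ and $\mu_p$ are treated as densities with respect to a common dominating measure on $\mathcal{C}$, such as Lebesgue measure. The discrete case is identical with sums in place of integrals.

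The first step is to check that the weight is well defined wherever it matters. By the defensive mixture~\eqref{eq:proposed_distirbution} with $\alpha<1$, we have $\mu_p(\theta)\ge(1-\alpha)\mu_n(\theta)$. Hence $\mu_p(\theta)>0$ whenever $\mu_n(\theta)>0$, so $\mu_n$ is absolutely continuous with respect to $\mu_p$. On the set where $\mu_p(\theta)=0$, we also have $\mu_n(\theta)=0$; this set has $\mu_p$-measure zero, and any convention for $w$ there does not affect the integral. Moreover, $0\le Z\le W_{max}$, so $Z$ is integrable and no convergence issues arise.

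The second step is the computation itself:
\begin{equation}\notag
\mathbb{E}_{\theta\sim\mu_p}[Z(\theta)]=\int_{\{\mu_p>0\}} \frac{\mu_n(\theta)}{\mu_p(\theta)}\,\ell(\delta(\theta))\,\mu_p(\theta)\,d\theta=\int_{\{\mu_p>0\}} \ell(\delta(\theta))\,\mu_n(\theta)\,d\theta.
\end{equation}
The support of $\mu_n$ is contained in $\{\mu_p>0\}$, so the last integral equals $\int \ell(\delta(\theta))\mu_n(\theta)\,d\theta=\mathbb{E}_{\mu_n}[\ell(\delta(\theta))]$. This quantity is $P_{fail}$ by definition.

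The argument involves no real obstacle. The only point requiring care is the support condition, which the defensive mixture guarantees through the inequality $\mu_p\ge(1-\alpha)\mu_n$. Measurability of $\theta\mapsto\ell(\delta(\theta))$ is implicitly assumed by the paper when it defines $P_{fail}$ as an expectation.
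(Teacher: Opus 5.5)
Your proposal is correct and uses the same change of measure as the paper: write the $\mu_p$-expectation as an integral, substitute $Z=\frac{\mu_n}{\mu_p}\ell$, and cancel $\mu_p$. The one addition is your explicit check that the defensive mixture gives $\mu_p\ge(1-\alpha)\mu_n$, which yields absolute continuity and integrability of $Z$; the paper's proof leaves this step implicit.
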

\begin{proof}
    \begin{align}
    \mathbb{E}_{\theta \sim \mu_{p}}[{Z}(\theta)] &= \int_\mathcal{C} {Z}(\theta) \mu_{p}(\theta) \hspace{1mm} d\theta \notag \\
        & = \int_\mathcal{C} \frac{\mu_{n}(\theta)}{\mu_{p}(\theta)}\ell(\delta(\theta))\mu_{p}(\theta) \hspace{1mm} d\theta \notag \\
        & = \int_\mathcal{C} \ell(\delta(\theta))\mu_{n}(\theta)\hspace{1mm} d\theta \notag \\
        & = \mathbb{E}_{\theta \sim \mu_{n}}[\ell(\delta(\theta))] = P_{fail}. \notag
    \end{align}
\end{proof}
Since ${Z}$ is a random variable whose expectation recovers the nominal failure probability, we can estimate $P_{fail}$ from a finite set of $N$ i.i.d. samples $\{\theta_i \}^N_{i=1}$ drawn from the proposed distribution \proposed{}. However, the introduction of importance weights results in an estimator that no longer follows a Bernoulli distribution. Therefore, standard binomial confidence bounds (e.g., Eq. \eqref{eq:binominversion}) are not applicable. 

To obtain an upper-bound on $P_{fail}$, we instead rely on the empirical mean of ${Z}$:
    \begin{equation}
        {\overline{Z}} = \frac{1}{N}\sum^{N}_{i=1} w(\theta_i)\ell(\delta(\theta_i)),
    \end{equation}
as well as the empirical variance:
    \begin{equation}
        \hat{V}_{{Z}} = \frac{1}{N-1} \sum^{N}_{i=1}(w(\theta_i)\ell(\delta(\theta_i))-{\overline{Z}})^2.
    \end{equation}
We then derive a concentration inequality tailored to bounded, weighted random variables.
\begin{theorem}[Importance-Weighted PAC Guarantee]
\label{thm:weightedbernstein}
Let $\{Z_i\}_{i=1}^N$ 
be obtained as in (\ref{Zi-def}) from $\{\theta_i \}^N_{i=1}$ drawn from the proposed distribution \proposed{}. 
Given $\beta \in (0,1)$, 
\begin{equation}\label{PAC}
    P^N(P_{fail} \leq \epsilon) \geq 1 - \beta, 
\end{equation}
where
\begin{equation}\label{eq:accuracy_weightedbernstein}
    \epsilon \coloneqq {\overline{Z}} + \sqrt{\frac{2\hat{V}_{{Z}} \ln(2/\beta)}{N}} + \frac{7 \ln (2/\beta)}{3(N-1)}W_{max}.
\end{equation}
\begin{proof}
    Since ${Z}(\theta)= w(\theta)\ell(\delta(\theta))$, where $\ell(\delta(\theta)) \in \{0,1\}$, and $w(\theta) \in [0,W_{max}]$, it follows that 
    $Z(\theta) \in [0,W_{max}]$.
    Define the scaled random variables ${X}= {{Z}(\theta)}/{W_{max}}$, $X_i=Z_i/{W_{max}}$, $i=1,\cdots,N$,  so that ${X},X_1,\cdots,X_N \in [0, 1]$.  Substituting 
    \begin{equation} \notag
        \mathbb{E}[{X}] = \frac{\mathbb{E}[{Z}(\theta)]}{W_{max}}, \quad \frac{1}{N}\sum^N_{i=1}X_i = \frac{{\overline{Z}}}{W_{max}}, \quad \hat{V}_{{X}} = \frac{\hat{V}_{Z}}{W_{max}^2}
    \end{equation}
   in
   Thm. \ref{thm:empiricalbernstein} Eq. \eqref{eq:empiricalbernstein} we obtain
    \begin{equation} \notag
        \mathbb{E}[{Z}] - {\overline{Z}} \leq \sqrt{\frac{2 \hat{V}_{Z} \ln(2/\beta)}{N}} + \frac{7 \ln(2/\beta)}{3(N-1)}W_{max},
    \end{equation}
    with probability at least $1-\beta$. Following from Lemma \ref{lemma:expectedvalue}, $\mathbb{E}[{Z}] = P_{fail}$ and thus
 \begin{equation} \notag
         P_{fail} \leq {\overline{Z}} + \sqrt{\frac{2 \hat{V}_{Z} \ln(2/\beta)}{N}} + \frac{7 \ln(2/\beta)}{3(N-1)}W_{max} =\epsilon.
     \end{equation}
    \qed
\end{proof}
\end{theorem}

Thm. \ref{thm:weightedbernstein} is adaptive to the empirical variance $\hat{V}_{{Z}}$ of the importance-weighted estimator. When the surrogate model effectively concentrates sampling in failure regions, the empirical variance decreases, tightening the bound. The dependence on $W_{max}$ captures the cost of enforcing bounded importance weights. Therefore, in the ideal limit, where the proposal precisely matches the failure region, $\hat{V}_{Z}$ approaches zero and Thm. \ref{thm:weightedbernstein} converges at a rate $\mathcal{O}(N^{-1})$. 
In contrast, standard Monte Carlo certifications are governed by binomial sampling fluctuations, converging at a rate $\mathcal{O}(N^{-1/2})$ \cite{Caflisch_1998}.

\subsection{Failure Set Characterization}
The result of Thm.~\ref{thm:weightedbernstein} is incorporated into Alg.~\ref{alg:solutionsteps}, which relies on two main assumptions: (i) the density functions of \nominal{} and \proposed{} are tractable for computing the importance weight~\eqref{eq:likelihoodratio} and (ii) \nominal{} is supported on the candidate set $\mathcal{C}$. These assumptions can be satisfied by employing set representations for $\mathcal{C}$ that bound \nominal{} and approximating general distributions with common parameterizations that admit analytical probability density functions. 
Under these assumptions, the proposed algorithm computes a failure-prone set $\mathcal{F}$ and its associated distribution \surrogate{}, enabling the use of Thm.~\ref{thm:weightedbernstein} with \proposed{}. To characterize $\mathcal{F}$ with a black-box model, we apply GPs to learn a level set that largely contains the failure region, and approximate this region with a set of polytopes. These polytopes define $\mathcal{F}$, and their volume is used to construct a uniform, tractable \surrogate{}. Given $\mathcal{F}$ and \surrogate{}, the bound $\epsilon$ on $P_{fail}$ is readily computed.

\begin{algorithm}[tb]
\caption{Failure Characterization}
\label{alg:solutionsteps}
\begin{algorithmic}[1]
\Require candidate set $\mathcal{C}$, nominal distribution \nominal{}, failure oracle $\phi$, initial conditions $\theta$, weighting parameter $\alpha$, confidence parameter $\beta$, sample size $N$
\Ensure failure-prone set $\mathcal{F} \subset \mathcal{C}$, bound $\epsilon$ on $P_{fail}$ 
\State $\mathcal{F} \leftarrow \text{compute\_failure\_set}(\mathcal{C}, \phi)$
\State $\mu_s \leftarrow \text{probability\_distribution}(\mathcal{F})$
\State $\mu_p \leftarrow \alpha \mu_s(\theta) + (1 - \alpha) \mu_n(\theta)$ \Comment{according to Eq.~\eqref{eq:proposed_distirbution}}
\State $\epsilon \leftarrow \text{IS\_weighted\_PAC}(N, \mu_p, \mathcal{F})$ \Comment{Eq.~\eqref{eq:accuracy_weightedbernstein}}
\end{algorithmic}
\end{algorithm}

\section{Empirical Results}
\label{sec:results}
We demonstrate Alg.~\ref{alg:solutionsteps} on an Adaptive Cruise Control (ACC) braking scenario and quadrotor flight control.\footnote{For reproducibility, we provide our implementation at: \href{https://github.com/eadietri/ISViableSets_CDC.git}{github.com/eadietri/ISViableSets\_CDC.git}} 
The low-dimensional ACC system admits an analytical estimate of the candidate VIS $\mathcal{C}$, enabling accurate evaluation of our results. In contrast, estimation of $\mathcal{C}$ for the high-dimensional quadrotor system is challenging, highlighting the scalability of the proposed method and bound. 
In this section, we provide details on GP-based $\mathcal{F}$ characterization and report results of Alg. \ref{alg:solutionsteps} for the two systems. 

\begin{figure}[tb]
    \centering
    \resizebox{.9\linewidth}{!}{
    \definecolor{safe}{rgb}{0.19, 0.55, 0.91}
\pgfplotsset{compat=1.18}

\pgfplotstableread[col sep=comma]{figures/data/gp_grid.csv}\gpgrid
\pgfplotstableread[col sep=comma]{figures/data/gp_contour.csv}\gpcontour
\pgfplotstableread[col sep=comma]{figures/data/gp_samples.csv}\gpsamples
\pgfplotstableread[col sep=comma]{figures/data/gp_polytope.csv}\gppolytope
\pgfplotstableread[col sep=comma]{figures/data/acc_viable_polytope.csv}\viableset

\begin{tikzpicture}

\begin{axis}[
    name=mainplot,
    width=0.45\textwidth,
    xlabel={Initial Gap [m]},
    ylabel={Leader Velocity [m/s]},
    xmin=0, xmax=10,
    ymin=0, ymax=10,
    axis background/.style={fill=white},
    colorbar=false,
    legend style={at={(1.05,1)}, anchor=north west}
    grid=major,
]

    \addplot[
    only marks,
    mark=x, mark size=1.5pt, thin,
    black,
    restrict expr to domain={\thisrow{label}}{-0.5:0.4},
    ] table[x=h, y=v] {\gpsamples};

    \addplot[
    thick,
    red,
    fill = red,
    fill opacity = 0.25,
    mark=none,
    ] table[x=h, y=v] {\gppolytope};

    \addplot[
    safe,
    fill = safe,
    fill opacity = 0.1,
    mark=none,
    ] table[x=h, y=v] {\viableset};

\end{axis} 

        \begin{axis}[
            at={(mainplot.north east)},
            xshift=-0.5cm,
            yshift=-0.5cm,
            anchor=north east,
            width=4.55cm,
            height=4.5cm,           
            xmin=0, xmax=1.5,
            ymin=0, ymax=4.5,
            grid=major,
            tick label style={font=\scriptsize},
            title style={font=\scriptsize},
            axis background/.style={fill=white}
        ]
        
            \addplot[
        only marks,
        mark=x, mark size=1.5pt, thin,
        black,
        restrict expr to domain={\thisrow{label}}{-0.5:0.4},
        ] table[x=h, y=v] {\gpsamples};
    
        \addplot[
        thick,
        red,
        fill = red,
        fill opacity = 0.25,
        mark=none,
        ] table[x=h, y=v] {\gppolytope};
    
        \addplot[
        safe,
        fill = safe,
        fill opacity = 0.1,
        mark=none,
        ] table[x=h, y=v] {\viableset};

        \end{axis}

\end{tikzpicture}
\vspace{0.2cm}
    }
    \caption{ACC candidate VIS $\mathcal{C}$ (blue) and learned failure-prone set $\mathcal{F}$ (red) with zoom in of boundary region. 
    \vspace{-2mm}}
\label{fig:gpresults_ACC}
\end{figure}

\subsection{Practical Failure Set Construction}
\label{sec:empirialGP}
To efficiently characterize the failure set $\mathcal{F}$, we actively explore $x_0 \in \mathcal{C}$, querying an oracle $\phi$ for failure labels. 
We utilize a least-squares GP classifier, implemented as GP regression on binary observations, and employ a $\gamma$-greedy acquisition strategy to balance boundary refinement with global exploration.
With probability $\gamma$, we uniformly sample at random to mitigate local convergence.  With probability $1-\gamma$, we select the point which minimizes the Straddle score: $S(x) = |(m + m_0) - 0.5| - \kappa \sigma(x)$, where $m$ denotes the GP's predictive mean, $m_0$ is the prior mean, $\kappa$ controls exploration, and $\sigma$ denotes predictive uncertainty. Thus, the strategy prioritizes uncertain points near the decision boundary. 
Given the learned GP level set, we approximate the failure-prone set $\mathcal{F}$. For simplicity, we fit a single polytope given the convex hull of the GP level set, yielding a potentially over-approximative set. 
We then assign the surrogate distribution \surrogate{} to $\mathcal{F}$. Note that when failures are scattered, $\mathcal{F}$ can be tightened by clustering failure regions, fitting polytopes to each cluster, and defining \surrogate{} over their union.
Finally, we construct the defensive mixture by combining \surrogate{} with \nominal{} over $\mathcal{C}$, enabling computation of the PAC bound~\eqref{eq:accuracy_weightedbernstein}. 

\begin{figure}[tb]
    \centering
    \resizebox{.9\linewidth}{!}{\pgfplotstableread[col sep=comma]{figures/data/acc_convergence.csv}\accdata

\pgfplotsset{
    accaxis/.style={
        width=7cm, height=6cm,
        xlabel={$M$},
        grid=both,
        grid style={line width=0.2pt, draw=gray!30},
        major grid style={line width=0.4pt, draw=gray!60},
        tick label style={font=\small},
        label style={font=\small},
        title style={font=\small},
        legend style={
        font=\scriptsize, 
        at={(0.97,0.97)}, 
        anchor=north east,
        legend image post style={scale=0.25}},
        mark size=1.5pt,
    }
}

\begin{tikzpicture}

\begin{axis}[
    accaxis,
    xmode=log, ymode=log,
    name=plot1,
    xmin=1000, xmax=1000000,
    xtick={1000, 10000, 100000, 1000000},
    xticklabels={$10^3$, $10^4$, $10^5$, $10^6$},
    ymin=7e-3, ymax=3e-2,
    ylabel={$\epsilon$},
]
    \addplot[black, mark=none, forget plot]
        table[x=M, y=true_prob] {\accdata};
    \addlegendimage{black}
    \addlegendentry{True Probability}

    \addplot[magenta, thick, dashed]
        table[x=M, y=mc_cp] {\accdata};
    \addlegendentry{Binomial Tail Inversion}

    \addplot[dashed, thick, color=teal]
        table[x=M, y=bound_is_0p15] {\accdata};
    \addlegendentry{IS ($\alpha=0.15$)}

    \addplot[dashed, thick, color=violet]
        table[x=M, y=bound_is_0p35] {\accdata};
    \addlegendentry{IS ($\alpha=0.35$)}
\end{axis}

\end{tikzpicture}}
    
    \vspace{0.5em}
    
    \resizebox{.9\linewidth}{!}{\pgfplotstableread[col sep=comma]{figures/data/acc_convergence.csv}\accdata

\pgfplotsset{
    accaxis/.style={
        width=7cm, height=6cm,
        xlabel={$M$},
        grid=both,
        grid style={line width=0.2pt, draw=gray!30},
        major grid style={line width=0.4pt, draw=gray!60},
        tick label style={font=\small},
        label style={font=\small},
        title style={font=\small},
        legend style={
            font=\scriptsize, 
            at={(0.03,0.03)},       
            anchor=south west,      
            mark size=1.5pt,
            legend image post style={scale=0.25}
        },
    }
}

\begin{tikzpicture}


\begin{axis}[
    accaxis,
    xmode=log, ymode=log,
    name=plot3,
    xmin=1000, xmax=1000000,
    ymin=1e-5, ymax=5e-2,
    xtick={1000, 10000, 100000, 1000000},
    xticklabels={$10^3$, $10^4$, $10^5$, $10^6$},
    ytick={1e-5, 1e-4, 1e-3, 1e-2},
    yticklabels={$10^{-5}$, $10^{-4}$, $10^{-3}$, $10^{-2}$},
    ylabel={$|\text{Bound} - \text{Truth}|$},
]
    \addplot[magenta, thick, dashed]
        table[x=M, y=excess_mc_cp] {\accdata};
    \addlegendentry{Binomial Error}

    \addplot[thick, dashed, color=teal]
        table[x=M, y=excess_is_0p15] {\accdata};
    \addlegendentry{IS Error ($\alpha=0.15$)}

    \addplot[thick, dashed, color=violet]
        table[x=M, y=excess_is_0p35] {\accdata};
    \addlegendentry{IS Error ($\alpha=0.35$)}

    \addplot[red, mark=none]
        table[x=M, y=slope_05] {\accdata};
    \addlegendentry{Slope $-0.5$}

    \addplot[black, mark=none]
        table[x=M, y=slope_10] {\accdata};
    \addlegendentry{Slope $-1.0$}
\end{axis}

\end{tikzpicture}}
    \caption{Top: ACC $\epsilon$ convergence to true failure probability. Bottom: Convergence rate of ACC estimators. The IS estimator converges faster to the true failure probability, closely following a $-1.0$ slope, whereas the binomial estimator converges with more variance at a rate close to $-0.5$. In this case study, $\alpha$ has a negligible effect on the IS estimator.
    }
    \label{fig:accbounds}
\end{figure}
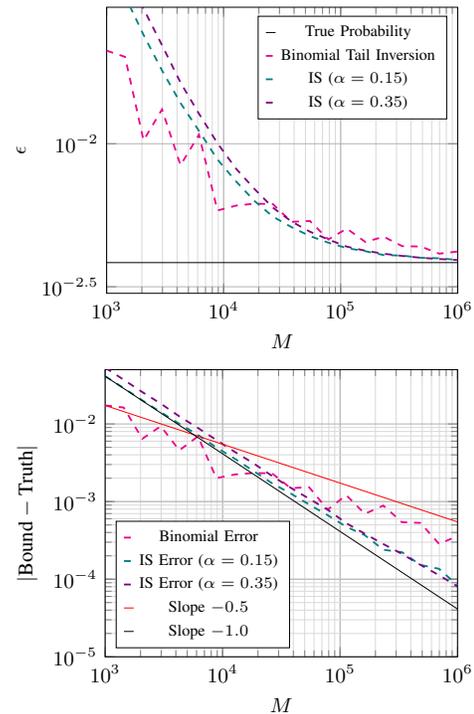

\subsection{Adaptive Cruise Control}
We investigate an ACC scenario using simplified point-mass vehicle models, where a follower vehicle must brake to avoid colliding with a leader vehicle\cite{devonport2020data}. If the distance between the two vehicles becomes zero at any time instance, a collision occurs. 
We take $\mathcal{C}$ as \cite[Eq. (12)]{devonport2020data} assuming optimistic deceleration of $\SI{8}{\meter\per\second\squared}$, representative of typical hard braking, and define a uniform nominal distribution \nominal{} supported on $\mathcal{C}$. We utilize GP regression, as described in Sec. \ref{sec:empirialGP}, where the oracle $\phi$ indicates failure if the two vehicles collide during the testing horizon of $\SI{10}{\second}$.

To simulate inaccuracies in the model used to compute $\mathcal{C}$, we assume more conservative braking dynamics, decreasing the deceleration term to $\SI{3}{\meter\per\second\squared}$, typical for highway driving. As seen in Fig. \ref{fig:gpresults_ACC}, the learned failure set $\mathcal{F}$ occupies a small region of the candidate VIS $\mathcal{C}$ near its boundary. Due to $\mathcal{F}$'s low volume and boundary position, this region is unlikely to be adequately captured by standard Monte Carlo sampling. 

We compare the \ac{IS} estimator proposed in Sec. \ref{sec:method} with a binomial tail inversion based on standard Monte Carlo sampling from \nominal{}, as illustrated in Fig. \ref{fig:accbounds}. The true failure probability is estimated as $0.00845$ using a Monte Carlo point estimate with $10^8$ samples. The \ac{IS} estimator converges quickly to the true failure probability, whereas the binomial tail inversion exhibits significant variance.  
The binomial estimator follows a convergence rate near a slope of $-0.5$, while the \ac{IS} estimator achieves a rate close to $-1.0$, consistent with the analysis presented in Sec. \ref{sec:method}. 
Additionally, we examine the convergence of the \ac{IS} estimator when varying the proposal mixture coefficient $\alpha$. While $\alpha$ primarily affects the bound through $W_{max}$ and $\hat{V}_Z$, it has negligible impact on convergence in this example.

\subsection{Quadrotor Flight Scenario}
\begin{figure}[tb]
    \centering
    \resizebox{.9\linewidth}{!}{
    \pgfplotsset{compat=1.18}
\pgfplotstableread[col sep=comma]{figures/data/quadrotor_gp_grid.csv}\gpgrid
\pgfplotstableread[col sep=comma]{figures/data/quadrotor_gp_contour.csv}\gpcontour
\pgfplotstableread[col sep=comma]{figures/data/quadrotor_gp_samples.csv}\gpsamples
\pgfplotstableread[col sep=comma]{figures/data/quadrotor_failure_polytope.csv}\gppolytope

\begin{tikzpicture}
\begin{axis}[
    name=plotA,
    width=8cm, height=7cm,
    xlabel={$h$ [m]},
    ylabel={$\dot{h}$ [m/s]},
    xmin=-0.34, xmax=0.34,
    ymin=-0.34, ymax=0.34,
    xtick={-0.34,-0.2,0,0.2,0.34},
    ytick={-0.34,-0.2,0,0.2,0.34},
    colormap={rdbu}{
        rgb255(0pt)=(178,24,43);
        rgb255(250pt)=(247,247,247);
        rgb255(500pt)=(33,102,172)
    },
    point meta min=-0.5,
    point meta max=1.5,
    colorbar,
    colorbar style={
        width=0.15cm,
        font=\tiny,
        xshift=-0.3cm,
    },
]
    \addplot[
        matrix plot*,
        mesh/cols=50,
        point meta=\thisrow{mean},
        shader=interp,
    ] table {\gpgrid};


    \addplot[
        only marks,
        mark=x, mark size=1.5pt, thin,
        black,
        restrict expr to domain={\thisrow{label}}{-0.5:0.4},
    ] table[x=h, y=v] {\gpsamples};

    \addplot[
        thick,
        red,
        fill=red,
        fill opacity=0.25,
        mark=none,
    ] table[x=h, y=v] {\gppolytope};

    \legend{, , $\mathcal{F}$}
\end{axis}
\end{tikzpicture}
    }
    \caption{Quadrotor GP surrogate model with failures and the learned failure set $\mathcal{F}$. Failures are concentrated at low altitudes near the learned boundary. Due to inaccuracy in the GP estimate, $\mathcal{F}$ does not capture all failure points.
    }
    \label{fig:results_quadrotor}
\end{figure}

We consider the 12-dimensional PD-controlled quadrotor system described in \cite[Chap. 8.2]{Meyer2021}.
We aim to verify that the quadrotor remains above $\SI{0.92}{\meter}$ meters after the first second and below a height $h$ of $\SI{1.5}{\meter}$ until the final time horizon of $\SI{5}{\second}$, which can be formalized in signal temporal logic:
\begin{align*}
    \Phi: \; &F_{[0, 1s]} G (h \geq 0.92) \land G(h \leq 1.5).
\end{align*}
The oracle $\phi$ indicates a violation if $\Phi$ is false. 
We define $\mathcal{C}$ as a hyper-rectangle with roll, pitch, and yaw angles and rates in $[-0.05, 0.05]$ and all remaining states in $[-0.34, 0.34]$. 
We define the nominal distribution \nominal{} as a truncated Gaussian with $\sigma = 1$, supported on $\mathcal{C}$. 
Since the specification we aim to verify depends only on height-related components of the quadrotor, the 2-dimensional failure set $\mathcal{F}$ in Fig. \ref{fig:results_quadrotor} is scattered in the remaining dimensions.
Additionally, the nominal distribution's tails result in a low probability of encountering failure, as confirmed in Fig. \ref{fig:mixtureproposal}. 

We estimate bounds on the failure probability $P_{fail}$ using both the \ac{IS} estimator and a binomial tail inversion with $\alpha=0.15$, $\beta=10^{-6}$, $N=10^6$, yielding $\epsilon=0.0138$ and $\epsilon=0.0143$, respectively. For reference, we estimate the true failure probability to be approximately $0.0137$ using a standard Monte Carlo point estimate with $10^8$ samples. Larger values of $\alpha$ improve sampling efficiency when the surrogate accurately captures $\mathcal{F}$, while inaccuracies in the surrogate model diminish the benefits of the \ac{IS} estimator.
In this example, $\alpha=0.15$ provides a balance.

\begin{figure*}[t]
\centering

\pgfplotsset{
  imagepanel/.style={
    width=0.25\textwidth, height=0.25\textwidth,
    xmin=-0.34, xmax=0.34,
    ymin=-0.34, ymax=0.34,
    xtick={-0.34, 0, 0.34},
    ytick={-0.34, 0, 0.34},
    tick align=outside,
    ticklabel style={font=\small},
    axis on top,
  }
}

\pgfplotsset{
  colorbar panel/.style={
    hide axis,
    scale only axis,
    colormap/viridis,
    colorbar,
    height=0.16\textwidth,
    colorbar style={
      yticklabel style={font=\small},
      width=0.2cm,
    },
  }
}
\resizebox{.75\linewidth}{!}{
\begin{tikzpicture}
\pgfplotsset{every axis/.append style={}}
  \begin{axis}[
    imagepanel,
    title={Nominal},
    xlabel={$h$ [m]},
    ylabel={$\dot{h}$ [m/s]},
    ylabel style={font=\small},
    xlabel style={font=\small},
    name=ax1,
  ]
    \addplot graphics [
      xmin=-0.34, xmax=0.34,
      ymin=-0.34, ymax=0.34,
    ] {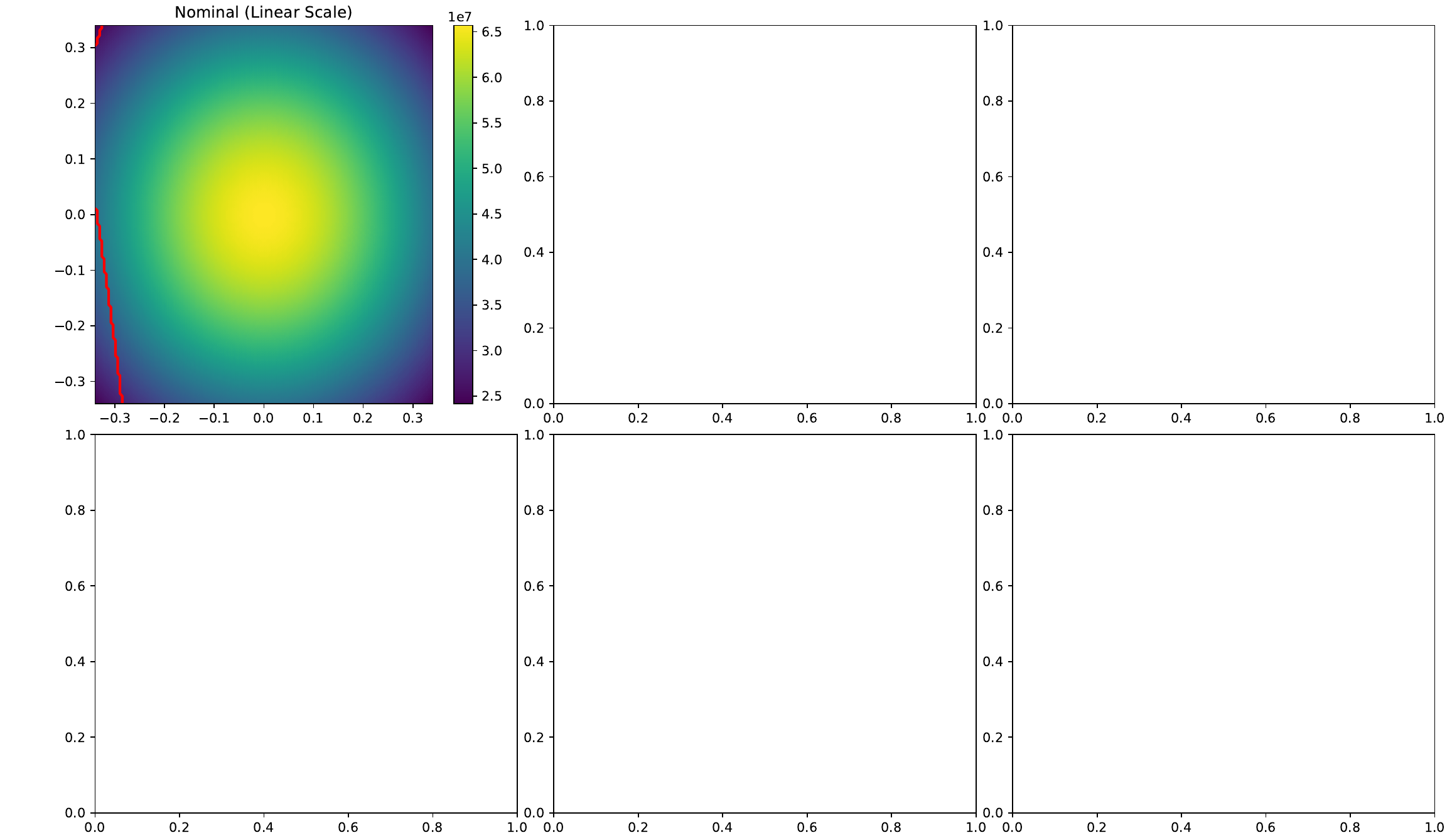};
  \end{axis}

  \begin{axis}[
    imagepanel,
    title={Surrogate},
    xlabel={$h$ [m]},
    yticklabels={},       
    ylabel style={font=\small},
    xlabel style={font=\small},
    name=ax2,
    at={($(ax1.east)+(0.8cm,0)$)},
    anchor=west,
  ]
    \addplot graphics [
      xmin=-0.34, xmax=0.34,
      ymin=-0.34, ymax=0.34,
    ] {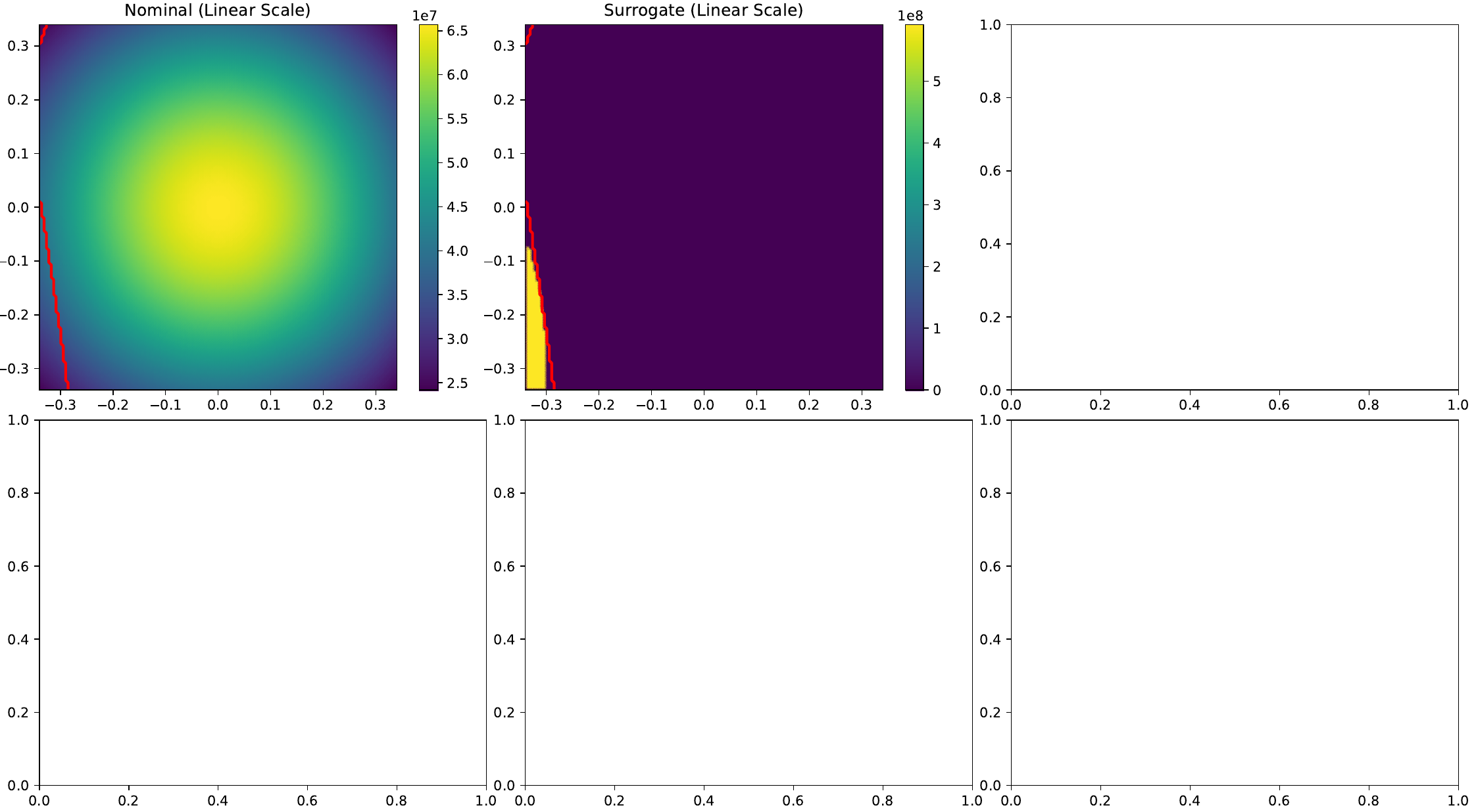};
  \end{axis}

  \begin{axis}[
    imagepanel,
    title={Mixture},
    xlabel={$h$ [m]},
    yticklabels={},
    ylabel style={font=\small},
    xlabel style={font=\small},
    name=ax3,
    at={($(ax2.east)+(0.8cm,0)$)},
    anchor=west,
  ]
    \addplot graphics [
      xmin=-0.34, xmax=0.34,
      ymin=-0.34, ymax=0.34,
    ] {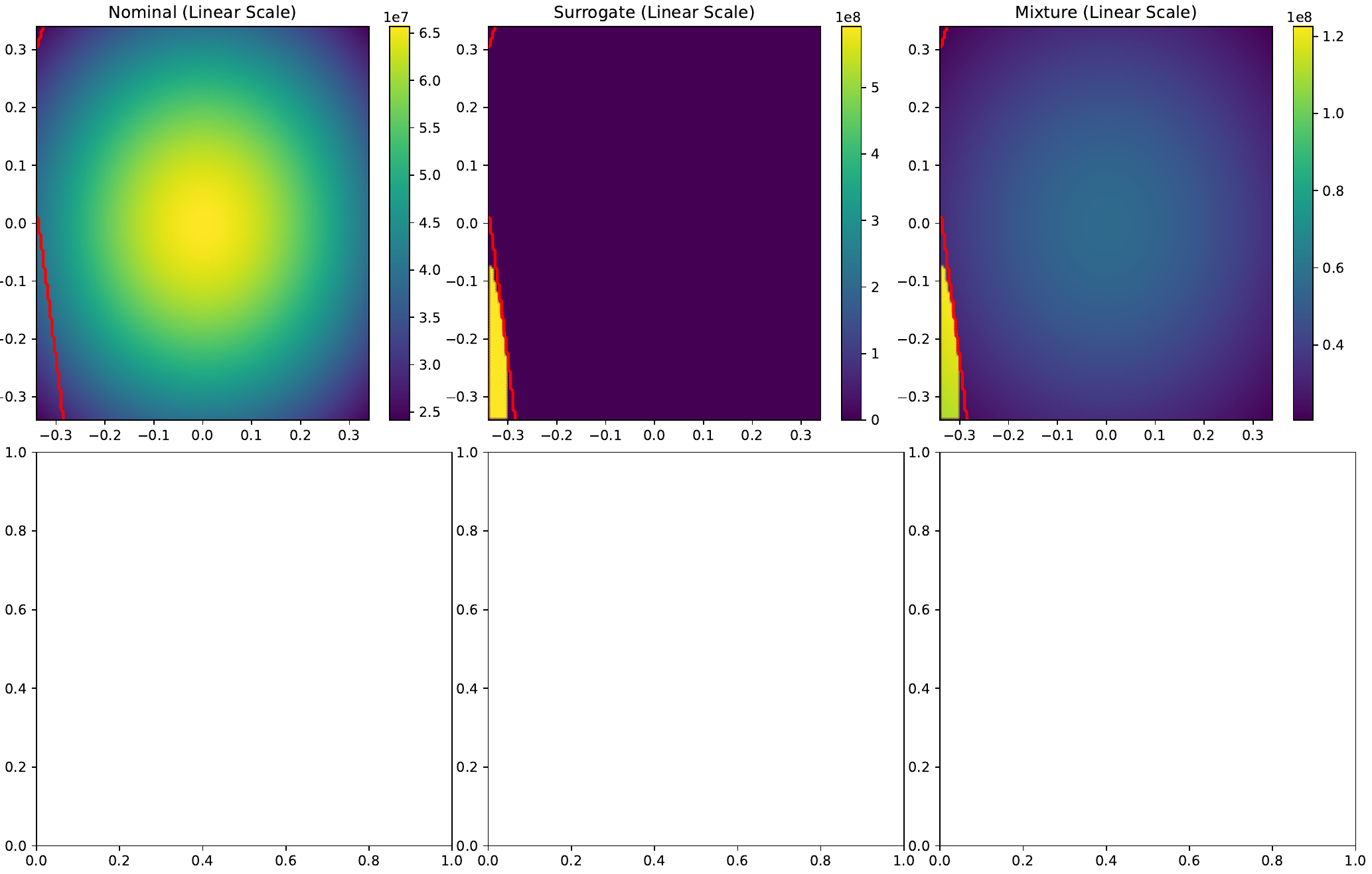};
  \end{axis}

  \begin{axis}[
    colorbar panel,
    at={($(ax3.north east)+(0.2cm,0)$)},
    anchor=north west,
    point meta min=0,
    point meta max=1,
    colorbar style={
      yticklabel style={font=\small},
      width=0.2cm,
      height=0.16\textwidth,
    },
  ]
  \end{axis}

\end{tikzpicture}
}
\vspace*{-0.2cm}
\caption{Nominal, surrogate, and proposed distributions: the defensive mixture increases the probability of sampling from the failure region. 
}
\label{fig:mixtureproposal}
\vspace*{-2.5mm}
\end{figure*}

\section{Conclusion}
This paper presents a data-driven framework for characterizing failure sets and certifying failure probabilities of candidate viable initial sets. By combining importance sampling and adaptive exploration, we efficiently target critical regions that are unlikely to be observed under the nominal distribution. We derive a PAC-style estimator based on an empirical Bernstein inequality that provides finite-sample guarantees for importance-weighted estimates. Our empirical results on both low- and high-dimensional benchmarks demonstrate that the proposed approach improves sample complexity compared to a binomial tail inversion.





\bibliography{literature}
\bibliographystyle{IEEEtran}

\end{document}